\title{Searching for Gene Sets with Mutually Exclusive Mutations}
\newcommand*\samethanks[1][\value{footnote}]{\footnotemark[#1]}
\author{Paul Ginzberg\thanks{Imperial College London, 180 Queen's Gate, London SW7 2AZ, UK} \and  Federico Giorgi\thanks{Columbia University, 1130 St. Nicholas Avenue, New York, NY 10032, USA} \and  Andrea Califano\samethanks}
\newtheorem{lemma}{Lemma}
\DeclareMathOperator{\prob}{P}
\DeclareMathOperator{\coverage}{\Gamma}
\DeclareMathOperator{\overlap}{\omega}
\DeclareMathOperator{\hyperg}{Hypergeom}
\newcommand{\kmax}{k_{\mathrm{max}}}
\newcommand{\bA}{\mathbf{A}}
\newcommand{\bC}{\mathbf{C}}
\begin{document}

\maketitle

\begin{abstract}
Cancer cells evolve through random somatic mutations. "Beneficial" mutations which disrupt key pathways (e.g. cell cycle regulation) are subject to natural selection. Multiple mutations may lead to the same "beneficial" effect, in which case there is no selective advantage to having more than one of these mutations. Hence we are interested in finding sets of genes whose mutations are approximately mutually exclusive (anti-co-occurring) within the TCGA Pancancer dataset. In principle, finding the best set is NP Hard. Nevertheless, we will show how a new Mutation anti-co-OCcurrence Algorithm (MOCA) provides an effective greedy search and testing algorithm with guaranteed control of the familywise error rate or false discovery rate, by combining some under-appreciated ideas from frequentist hypothesis testing. These ideas include: (a) A novel exact conditional test for the tendency of multiple sets to have a large/small union/intersection, which generalises Fisher's exact test of 2x2 tables. (b) Randomised hypothesis tests for discrete distributions. (c) Stouffer's method for combining p-values. (d) Weighted multiple hypothesis testing. A new approach to setting a-priori weights which generates additional implicit hypothesis tests is suggested, and allows us to preserve almost all statistical power when testing pairs despite introducing a combinatorially large number of additional hypotheses.
\begin{keywords}
hypothesis testing, Fisher's exact test, cancer, genetics, co-occurrence, exclusivity
\end{keywords}
\end{abstract}

\section{Introduction\label{intro}}

Modern sequencing technology provides a wealth of genetic, genomic and metabolomic data, and using this data to help map out the complex interactions between genes, and between genes and cancer progression is an ongoing effort. This paper describes an algorithm for this based on detecting exclusivity patterns between somatic mutations.

It is widely believed that cancer progression is linked to the gradual accumulation of somatic mutations in tumour cells, and in particular of mutations which disrupt the functioning of certain key pathways such as cell cycle regulation and DNA repair \citep{hanahan_hallmarks_2011}. These pathway-disrupting mutations allow the cells which acquire them to multiply faster, thus conferring a selective advantage relative to the surrounding cells, and hence these mutations will be more common in the population of cancer cells than what the background mutation rate on its own would predict. Passenger mutations which do not affect cancer progression will on the other hand occur only at the background rate. This underlies analyses such as  \citet{kandoth_mutational_2013} which identify genes relevant to cancer through their individual somatic mutation rates

The functioning of each pathway is complex, involving a large number of genes, some of which perform gene regulation functions. Hence each pathway may be disrupted by different somatic mutations in different patients. Although the first mutation to disrupt a given pathway offers a selective advantage, we expect that the accumulation of additional mutations in that pathway will typically not. Hence cancer cells in which multiple mutations disrupt one pathway will be rarer than if such mutations were statistically independent. In an idealised scenario, one would expect that the set of somatic mutations in each patient will consist of exactly one mutation disrupting each of the pathways. Across patients, the mutations affecting a given pathway would then be perfectly mutually exclusive. 

Reality is of course more complex than the simplified description given above. For example certain mutations may disrupt multiple pathways (e.g. TP53), and certain pairs of mutation exhibit synthetic lethality \citep{mccarthy_systems_2011,li_syn-lethality:_2014}. Because of various reasons including background mutation rates, possible residual selective advantages in having more than one mutation per pathway, sequencing errors, and the existence of additional pathway disruption mechanisms, some patients may have multiple mutations from genes in a given pathway, and some may have none. A tendency for mutations in some of the genes affecting a given pathway to co-occur less than one would expect by chance nevertheless will remain, and can be detected using statistical techniques.
Various analyses and interpretations of both co-occurrence and anti-co-occurrence patterns in somatic mutations have been suggested, mostly based on testing pairs of genes \citep{yeang_combinatorial_2008,cui_network_2010,gu_network_2013,wang_finding_2011,gu_systematic_2010}. In this paper we are interested only in detecting anti-co-occurrence, i.e. appproximate mutual exclusivity between genes, and we wish to consider not just pairs but also larger sets of genes.

Multiple methods have been suggested to search for sets of anti-co-occurring genes \citep{leiserson_comet:_2015-2,leiserson_simultaneous_2013,vandin_novo_2012,babur_systematic_2015,constantinescu_timex:_2015}. The Dendrix algorithm of \cite{vandin_novo_2012} showed that once the tendency of driver mutations to belong to mutually exclusive sets is taken into account, even relatively rare driver mutations can be identified `de novo' from somatic mutation data, i.e. without any prior pathway information. Dendrix searches for sets of genes with high overall coverage and low overlap. This approach has been further developed in the Multi-Dendrix algorithm \citep{leiserson_simultaneous_2013} which finds multiple gene sets simultaneously and the group's current state of the art CoMEt (formerly Dendrix++) \citep{leiserson_comet:_2015-2} which uses a statistical test of exclusivity to score sets. See \citet{babur_systematic_2015} for a review and comparison of methods.

In Section~\ref{sec:notation} we introduce some basic notation. Section \ref{sec:testing} describes how we compute the statistical significance of the anti-co-occurrence of a set of genes/alterations by generalising Fisher's exact test from pairs of alterations to sets of alterations, and applying a randomised Stouffer's method to combine information from multiple tumour types. Section~\ref{sec:weights} introduces a novel a weighted Bonferroni correction scheme to control the familywise error rate whilst maintaining high power for small sets. Section~\ref{sec:greedy} describes a greedy algorithm to generate a shortlist of candidate gene sets which will be tested according to Section~\ref{sec:testing}. Finally Section~\ref{sec:TCGA} shows the pattern of anti-co-occurrence detected by our method on a combined dataset of 5807 tumour samples from 22 tumour types. R code implementing our method and the full list of 654 statistically significant gene sets are available at \url{https://github.com/PaulGinzberg/MOCA/}.

\section{Setup and Notation\label{sec:notation}}

For each patient in the data we identify for each gene the presence or absence of three possible alterations: whether there are any functional SNVs (single nucleotide variations) or CNVs (copy number variations), which may be AMPs (amplifications) or DELs (deletions). This produces  an $m \times n$  binary matrix $\bA=(a_{ij})$ where each patient or sample corresponds to a column, and each gene corresponds to (up to) three rows, one for each of the three types of alterations considered (SNV, AMP, DEL). Hence $a_{ij}=1$ means ``alteration $i$ is present/mutated in sample $j$''.
For the purpose of understanding the statistical methods, one may conflate genes with alterations, i.e. assume that each row in $\bA$ corresponds to a different gene. After pre-processing, which is detailed in Section~\ref{sec:preprocessing}, our data contains $m=1418$ alterations and a total of $n=5807$ samples.
These $5807$ samples are split across 22 different types of cancer, as described in Table~\ref{table:types}.

For ease of comparison, we will follow a notation similar to \citet{leiserson_simultaneous_2013}. 
$M \subseteq \{1,\ldots,m\}$ will denote a set of alterations of size $|M|$.
 The coverage $\coverage(i)$ of alteration $i$ is the number of samples for which this alteration is present $\coverage(i)=\left|\left\{j:a_{ij}=1\right\}\right|=\sum_{j=1}^n a_{ij}$. Let $\bC=(\coverage(1),\ldots,\coverage(m))$ denote the vector of these marginal coverages. Alteration set $M$ is considered to be present/mutated in sample $j$ if at least one of the alterations from $M$ is present in sample $j$.
Hence the coverage of $M$ is defined as
\begin{equation}\label{eq:coverage}
\coverage(M)=\left|\left\{j:\sum_{g \in M} a_{gj} \geq 1\right\}\right|.
\end{equation}
 
 $M$ is perfectly mutually exclusive iff no sample has more than one alteration from $M$, i.e. iff $\coverage(M)=\sum_{g \in M} \coverage(g)$.
The overlap of an alteration set $M$ is the number of additional mutations compared to the perfect mutual exclusivity case $\overlap(M)=\sum_{g \in M} \coverage(g) - \coverage(M)$.

\section{Testing for group-wise anti-co-occurrence}\label{sec:testing}
\subsection{An exact test for whether the intersection/union of multiple sets is smaller/larger than expected by chance\label{sec:exacttest}}
Given an alteration set $M=\{g_1,\ldots,g_{|M|}\}$ we wish to test the following hypotheses:
\begin{description}
\item[$H_0:$] The alterations $a_{g_1j},\ldots,a_{g_{|M|}j}$ occur independently.
\item[$H_1:$] The alterations $a_{g_1j},\ldots,a_{g_{|M|}j}$ tend to co-occur less frequently than if they were independent, i.e. for $i,\ell \in M$ $\prob(a_{ij}=1 \cap a_{\ell j}=1) < \prob(a_{ij}=1)\prob(a_{\ell j}=1)$.
\end{description}

When $|M|=2$ the most commonly used test for this problem is a one-sided Fisher's exact test for the $2 \times 2$ contingency table of the events $a_{g_1 j}=1$ and $a_{g_2 j}=1$. The test rejects $H_0$ for large values of the test statistic $\coverage(M)$.  The exact conditional null distribution of $\coverage(M)-\coverage(g_1)$ given $(\coverage(g_1),\coverage(g_2))$ (or equivalently given $\bC$) is Hypergeometric:
\begin{equation}\label{eq:fishercdist}
\coverage(M)\mid (\coverage(g_1),\coverage(g_2)) \,\sim\, \coverage(g_1) + \hyperg(n,n-\coverage(g_1),\coverage(g_2)),
\end{equation}
where the hypergeometric distribution has probability mass function 
\begin{equation*}
f_{\hyperg(n,k,r)}(x)=\frac{{k\choose x}{n-k\choose r-x}}{{n \choose r}},
\end{equation*}
over the support $\max(0,k+r-n) \leq x \leq \min(k,r)$.

Let us now consider the case of groupwise testing, i.e. $|M|>2$. One approach which has been suggested in the literature is do define fully parametric generative models of the mutations under both the null and alternative hypothesis and then perform an asymptotic likelihood ratio test \citep{szczurek_modeling_2014,constantinescu_timex:_2015}. This approach relies on the validity of the underlying generative models. We will instead generalise Fisher's exact test by deriving the conditional null distribution of $\coverage(M)$ given $\bC$. The use of the test statistic $\Gamma(M)$ for $|M|\geq 2$ had already been suggested by \citet{ciriello_mutual_2012}. However they relied on Monte Carlo permutation testing to compute p-values, an approach which is computationally prohibitive when the p-values are very small. We will provide an algorithm for fast computation of \emph{exact} p-values for the test statistic $\Gamma(M)$.

Assume $H_0$. We will now derive the conditional null distribution of $\coverage(M)$ given \linebreak$\coverage(g_1),\ldots,\coverage(g_{|M|})$, (or equivalently given $\bC$) by iterated convolution.
For $s=1,\ldots,|M|$ let $M^{(s)}=\{g_1,\ldots,g_s\} \subseteq M$, so that $M^{(|M|)}=M$. If $1<s\leq|M|$ then 
\begin{equation}\label{eq:iterativePMF}
\coverage(M^{(s)})\mid(\coverage(M^{(s-1)}),\coverage(g_s)) \,\sim\, \coverage(M^{(s-1)})+\hyperg(n,n-\coverage(M^{(s-1)}),\coverage(g_s)).
\end{equation}
Using \eqref{eq:iterativePMF} and the fact that under $H_0$ $\coverage(M^{(s)})$ is conditionally independent of $\bC$ given $(\coverage(M^{(s-1)}),\coverage(g_s))$, we can compute the conditional probability mass function $f_{\coverage(M)|\bC}(x)$ iteratively as
\begin{equation}
f_{\coverage(M^{(s)})|\bC}(x)
=
\sum_y
f_{\hyperg(n,n-y,\coverage(g_s))}(x-y)f_{\coverage(M^{(s-1)})|\bC}(y).
\label{eq:fiter}
\end{equation}
The p-value for our test is then 
\begin{equation}\label{eq:pM}
p_M=\sum_{x \geq \coverage(M)} f_{\coverage(M^{(|M|)})|\bC}(x).
\end{equation}
Note that computing this p-value only requires evaluating the probability mass function for values of $x$ in the right tail at each iteration. In particular, if $M$ is perfectly mutually exclusive then each sum (\ref{eq:pM}), (\ref{eq:fiter}) contains only one term.

Because our test is based on $\Gamma(M)$, it has a simple and intuitive interpretation: ``Given $|M|$ subsets of $\{1,\ldots,n\}$ of sizes $\coverage(g_1),\ldots,\coverage(g_{|M|})$, is the size $\coverage(M)$ of their union larger than would be expected for independently drawn random subsets of those sizes?''. If we wished to detect co-occurrence rather than anti-co-occurrence could also perform the opposite one-sided test and reject for small values of $\coverage(M)$. Also, because $n-\coverage(M)$ is the size of the intersection of $|M|$ subsets of $\{1,\ldots,n\}$ of sizes $n-\coverage(g_1),\ldots,n-\coverage(g_{|M|})$, we can apply our test to intersections instead of unions. We expect that our generalisation of Fisher's exact test will have applications beyond biostatistics.

Because it requires $|M|-2$ ``convolutions'' (plus two sums), the computational complexity of our iterative approach is linear in $|M|$. \citet{leiserson_comet:_2015-2} perform exact groupwise testing but uses as a test statistic the number of samples for which exactly one alteration from $M$ is present $T(M)=\left|\left\{j:\sum_{g \in M} a_{gj} = 1\right\}\right|$. For $|M|=2$ both tests are equivalent and reduce to Fisher's exact test. The two tests are also equivalent when there is perfect mutual exclusivity. For $|M|>2$ and imperfect mutual exclusivity however the computational complexity of the enumeration technique employed by \citet{leiserson_comet:_2015-2} grows exponentially with $|M|$, and becomes impractical for large sample sizes except in the extreme tail of the distribution. In practice the two tests will tend to give similar answers, although the statistic $T(M)$ has the advantage of being more robust to the presence of hypermutated phenotypes.

Suppose for example that $g_1,g_2 \in M$ are respectively amplification and deletion of the same gene. Then it does not make sense to treat the anti-co-occurrence between $g_1$ and $g_2$ as evidence of any interesting biological relationship, since the two alterations are perfectly mutually exclusive by definition. Hence, whenever multiple alterations of a given gene are present in an alteration set, we will condition on the total coverage of these multiple alterations, and not just the coverage of each alteration. This is equivalent to treating $\{g_1,g_2\}$ as a single alteration of size $\coverage(\{g_1,g_2\})$ in our test.\footnote{The only difference is that when performing our multiple hypothesis testing correction, the set size will be the original the number of alterations, not the number of genes.}

\subsection{Combining p-values across cancer types}\label{sec:stouffer}
The mutation frequencies of certain genes is different in different cancer types, and indeed some mutations may be specific to only one type or subtype of cancer. In a dataset with multiple cancer types, this effect can on its own induce anti-co-occurrence between functionally unrelated alterations simply because the cancer types in which they are common are different. To avoid this effect, when applying our method we will compute p-values on each cancer type separately, and then combine these p-values with Stouffer's method.
Two issues must however be addressed. First, what weights to use in Stouffer's method, and second how to handle effects caused by the fact that the distribution of our test statistic $\coverage(M)$ (and hence of our p-values) is discrete.

Stouffer's method consists in computing the combined p-value
\[
p_\mathrm{Stouffer}=\Phi\left(\frac{\sum_{\tau} v_\tau \Phi^{-1}(p_\tau)}{\sqrt{\sum_{\tau} v_\tau^2}}\right),
\]
where $\Phi$ is the standard normal CDF, $v_\tau$ are (unnormalised) weights, and $p_\tau$ are the p-values from the independent tests being combined.

The asymptotically optimal weights to use in the case where $|M|=2$ (and the effect size is the same for all $\tau$) are well known, and are the inverse asymptotic standard deviation of the empirical log-odds ratio (which is a variance-stabilised parameter estimator) under the null. The (unnormalised) weight for each tumour type $\tau$ in this case is
\begin{align*}
v_{\{g_1,g_2\},\tau}=&
\left(
\frac{n_\tau}{\coverage_\tau(g_1)\coverage_\tau(g_2)}
+\frac{n_\tau}{\coverage_\tau(g_1)(n-\coverage_\tau(g_2))}
\right.\\
&
\left.
+\frac{n_\tau}{(n-\coverage_\tau(g_1))\coverage_\tau(g_2)}
+\frac{n_\tau}{(n-\coverage_\tau(g_1))(n-\coverage_\tau(g_2))}\right)^{-\frac{1}{2}},
\end{align*}
where $n_\tau$ is the number of samples with tumour type $\tau$ and the coverages $\coverage_\tau$ are computed using only those samples with tumour type $\tau$.

Finding a nice closed-form formula for the more general case $|M| >2$ is non-trivial since it requires us to formulate some appropriate parametric alternative distribution generalising the non-central hypergeometric distribution, and we opt instead for the following heuristic weight $v_M$, based on assuming that the power of the groupwise test can be approximated as the power of combining all pairwise tests between genes in the group, as though they were independent:
\[
v_{M,\tau} = \left( \sum_{k=1}^{|M|-1}\sum_{\ell=k+1}^{|M|} v_{\{g_k,g_\ell\},\tau}^{2}\right)^{\frac{1}{2}}.
\]
The fact that these weights are based on approximate power does not make our testing procedure approximate, although it may cause the overall power of our combined test to be lower than it would be with optimal choices of weights. Although for each $M$ our weights depend on the data, they are valid because they only use information in $\bC$, and the tests being performed are all conditional on fixed $\bC$.

In the case of exact tests based on continuous test statistics, the null distribution of each $p_\tau$ will be $\mathrm{Uniform}(0,1)$, and so will the null distribution of $p_\mathrm{Stouffer}$. This is no longer true with discrete tests. If we apply Stouffer's method naively, then we will optain a p-value satisfying $\prob(p_\mathrm{Stouffer}\leq \alpha)\leq \alpha$, but it may be catastrophically conservative. Indeed, if even just one of the tests being combined returns a p-value of 1, then the overall p-value is 1. 

\citet{kincaid_combination_1962} considers various approaches for solving this problem when Fisher's method for combining p-values is used. However, their remarks are also valid for Stouffer's method. One of the simpler suggested approaches is a simplified version of Lancaster's procedure, which is equivalent to using mid-p-values $p^\mathrm{mid}_\tau=p_\tau - 0.5(p_\tau-p^-_\tau)$, where $p^-_\tau=\prob_0(\coverage(M)>\gamma(M))$, instead of p-values $p_\tau=\prob_0(\coverage(M)\geq\gamma(M))$. The logic behind this is that the distribution of the mid-p-value is better approximated by the uniform distribution than that of the p-value. Because of its simplicity we will use this approximate approach to combine p-values from Fisher's exact test used internally in our greedy algorithm for generating candidate gene sets (See Section~\ref{sec:greedy}).
However, because this approximate approach does not guarantee $\prob(p_\mathrm{Stouffer}\leq \alpha)\leq \alpha$, (and does not produce a mid-p-value) we will use instead the Pearson approach \citep{kincaid_combination_1962} for combining the final p-values: This latter approach consists in using independently randomised p-values  
\[
p^\mathrm{rand}_\tau=p_\tau - (p_\tau-p^-_\tau) \cdot \mathrm{Uniform}(0,1)
\]
 instead of the original p-values $p_\tau$. Combining randomised p-values will never increases the number of false negatives compared to naively combining non-randomised p-values, but still controls the false positive rate so that  $\prob(p_\mathrm{Stouffer}\leq \alpha)= \alpha$. 
For large sample sizes, (and finite log-odds-ratio) this randomised $p_\mathrm{Stouffer}$ is restricted to lie in a small interval below the best (but computationally prohibitive) non-randomised p-value, so the variability introduced by randomisation becomes negligible.

\section{A novel weighted multiple hypothesis testing scheme\label{sec:weights}}

It is appropriate to correct the alteration set p-values for multiple hypothesis testing. Because the procedure for generating candidate alteration sets described in Section~\ref{sec:greedy} uses information from the data (other than $\bC$), the multiple hypothess testing correction must also take into account all tests which could have been performed, even when the number of candidate alteration sets actually tested is limited. The total number of alteration sets which could have been tested is $\sum_{k=2}^{\kmax}{m \choose k}$, where $\kmax$ denotes the maximum allowed alteration set size. If we apply a standard unweighted multiple hypothesis testing correction, e.g. a Bonferroni correction, then the corrected p-values will depend strongly on the choice of $\kmax$. If we are agnostic a-priori about the size of alteration sets and use $\kmax=m$, then the number of alteration sets which could have been tested is $2^m-m-1 > 10^{426}$. This is so large that there is little hope of any statistical significance after such a standard unweighted multiple hypothesis testing correction.

Since the main motivation behind performing groupwise testing is that the groupwise test can be much more powerful than pairwise tests, we must ensure that the loss of power caused by introducing additional hypotheses with $|M|>2$ does not overwhelm the gains.
We propose to use a \emph{weighted} multiple hypothesis testing correction scheme which upweights smaller alteration sets at the expense of larger ones, and which is based on the argument that the smallest p-values after weighting should typically correspond to the most ``interesting'' alteration sets. In practice our proposed approach leads to only a small increase in the multiple hypothesis correction applied to $p_M$ when $|M|=2$, despite the large number of additional hypotheses.

For simplicity, we will assume that there is a single cancer type in this section. Define the conditional null probability measure $\prob_0\left(\bullet\right)=\prob\left(\bullet \middle| \bC, H_\emptyset\right)$ corresponding to the global null hypothesis $H_\emptyset$ that the rows of $\bA$ are independent. Then the p-value obtained when applying our test to alteration set $M$ is 
$
p_M=\prob_0\left(\Gamma(M) \geq \gamma(M) \right)
$, where $\gamma(M)$ is the observed value of the test statistic $\Gamma(M)$.

Let $0< \alpha \leq 1$ (which need not be equal to the significance level used for selecting statistically significant alteration sets) and define the weights
\begin{equation}\label{eq:weights}
w_{M}= \frac{\left(\sum_{\ell=2}^{\kmax}{m \choose \ell}\right)\prod_{k=2}^{|M|}\left(1-(1-\alpha)^\frac{1}{m-k+1}\right)}{\sum_{\ell=2}^{\kmax}{m \choose \ell}\prod_{k=2}^{\ell}\left(1-(1-\alpha)^\frac{1}{m-k+1}\right)},
\end{equation}
which are normalised to have an average value of 1.
\begin{lemma}\label{lemma:implicittest}
\[
\prob_0\left(\exists {g \notin M} : w_{M \cup\{g\}}^{-1} p_{M \cup\{g\}} < w_M^{-1} p_M \middle| \Gamma(M)\right) \leq \alpha
\]
\end{lemma}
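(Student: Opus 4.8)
\emph{Proof sketch.} The plan is to condition on $\coverage(M)$ throughout, rewrite the displayed event as a union of $m-|M|$ simpler events, show these are conditionally independent, bound each one, and multiply. First, the reduction: writing $k=|M|$, the product in \eqref{eq:weights} telescopes and the common normalising denominator cancels, so for every $g\notin M$ one has $w_{M\cup\{g\}}/w_M = 1-(1-\alpha)^{1/(m-k)}=:\beta$ (using $m-(k+1)+1=m-k$). Thus $\{w_{M\cup\{g\}}^{-1}p_{M\cup\{g\}}<w_M^{-1}p_M\}=\{p_{M\cup\{g\}}<\beta\,p_M\}=:A_g$, and the event in the Lemma is $\bigcup_{g\notin M}A_g$, a union over the $m-k$ alterations not in $M$. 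Since $(1-\beta)^{m-k}=1-\alpha$, it suffices to prove that, conditionally on $\coverage(M)$, the $A_g$ are mutually independent with $\prob_0(A_g\mid\coverage(M))\le\beta$ each; then
\[
\prob_0\Big(\textstyle\bigcup_{g\notin M}A_g\,\Big|\,\coverage(M)\Big)=1-\prod_{g\notin M}\big(1-\prob_0(A_g\mid\coverage(M))\big)\le 1-(1-\beta)^{m-k}=\alpha .
\]
The cases $M=\{1,\dots,m\}$ (empty union) and $\alpha=1$ ($\beta=1$) are trivial, so assume neither holds.

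Next, conditional independence. As in the derivation of \eqref{eq:iterativePMF}, under $\prob_0$ the rows of $\bA$ are independent with row $g$ uniform over the weight-$\coverage(g)$ binary vectors. Because the $n$ samples are exchangeable, conditioning on $\coverage(M)=\gamma$ has, for the rows outside $M$, the same effect as conditioning on some fixed set of size $\gamma$ being exactly the set covered by $M$; given that set, the quantities $T_g:=\coverage(M\cup\{g\})-\gamma$ for $g\notin M$ are independent, each $\hyperg(n,n-\gamma,\coverage(g))$ — this is the one-step recursion \eqref{eq:iterativePMF} applied with $M^{(s-1)}=M$. Since $p_{M\cup\{g\}}=\bar G_g(\gamma+T_g)$, where $\bar G_g(x):=\prob_0(\coverage(M\cup\{g\})\ge x)$, the event $A_g$ is a function of $T_g$ alone, so the $A_g$ are conditionally independent given $\coverage(M)$.

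Finally, the per-event bound, which is the heart of the argument. I would first prove the auxiliary fact that $\coverage(M\cup\{g\})$ is stochastically nondecreasing in $\coverage(M)$ under $\prob_0$: realise the support of row $g$ as a single uniform random subset $B$ of size $\coverage(g)$, and for $\gamma'\ge\gamma$ realise the two candidate covered sets through a nested pair $C\subseteq C'$ of sizes $\gamma,\gamma'$; then $|C\cup B|\le|C'\cup B|$ pointwise, which gives the stochastic order. Consequently, writing $\bar G_g(x\mid\gamma):=\prob_0(\coverage(M\cup\{g\})\ge x\mid\coverage(M)=\gamma)$ and $p_M=\prob_0(\coverage(M)\ge\gamma)$, and restricting the total-probability expansion of $\bar G_g(x)$ to indices $\gamma'\ge\gamma$ while applying the stochastic order termwise, one gets $\bar G_g(x)\ge p_M\,\bar G_g(x\mid\gamma)$ for all $x$. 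On $A_g$ one has $\bar G_g(\coverage(M\cup\{g\}))<\beta\,p_M$, and $p_M>0$ because $\gamma$ is an attained value, so $A_g\subseteq\{\bar G_g(\coverage(M\cup\{g\})\mid\gamma)<\beta\}$. Since the survival-function transform of a discrete random variable is stochastically at least $\mathrm{Uniform}(0,1)$, $\prob_0(\bar G_g(\coverage(M\cup\{g\})\mid\gamma)<\beta\mid\coverage(M)=\gamma)\le\beta$, hence $\prob_0(A_g\mid\coverage(M))\le\beta$.

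I expect this last step — proving the stochastic-monotonicity lemma and correctly handling the discreteness of $\coverage(M\cup\{g\})$ in the survival-transform inequality — to be the only genuinely non-routine part; the reduction and the conditional-independence argument are bookkeeping, once one makes explicit the ``rows independent, each uniform given $\bC$'' model underlying Section~\ref{sec:testing}.
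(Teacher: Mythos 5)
Your proof is correct and follows essentially the same route as the paper: the same weight-ratio reduction to $\beta=1-(1-\alpha)^{1/(m-|M|)}$, the same conditional-independence product over $g\notin M$, the key inequality $p_{M\cup\{g\}}\ge p_M\cdot\prob_0\left(\Gamma(M\cup\{g\})\ge\gamma(M\cup\{g\})\mid\Gamma(M)=\gamma(M)\right)$ obtained by restricting the total-probability sum to $k\ge\gamma(M)$, and the conservative p-value bound at the end. The only difference is that you explicitly prove (via a nested-set coupling) the stochastic monotonicity of $\Gamma(M\cup\{g\})$ in $\Gamma(M)$, a step the paper's proof uses but leaves implicit.
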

\begin{proof}
$\Gamma(M)$ is a random variable for each $M$, and let us first consider arbitrary fixed data leading to an observed value $\gamma(M)$. 
\begin{align*}
p_{M \cup\{g\}} &= \sum_{k \geq 0} \prob_0\left(\Gamma(M\cup\{g\}) \geq \gamma(M\cup\{g\})\middle|\Gamma(M)=k\right)\prob_0(\Gamma(M)=k)\\
&\geq \sum_{k \geq \gamma(M)} \prob_0\left(\Gamma(M\cup\{g\}) \geq \gamma(M\cup\{g\})\middle|\Gamma(M)=k\right)\prob_0(\Gamma(M)=k)\\
& \geq \prob_0\left(\Gamma(M\cup\{g\}) \geq \gamma(M\cup\{g\})\middle|\Gamma(M)=\gamma(M)\right) \sum_{k \geq \gamma(M)} \prob_0(\Gamma(M)=k)\\
&= \prob_0\left(\Gamma(M\cup\{g\}) \geq \gamma(M\cup\{g\})\middle|\Gamma(M)=\gamma(M)\right) \cdot p_M\\
&= p_{M\cup\{g\}|M} \cdot p_M,
\end{align*}
where $p_{M\cup\{g\}|M}=\prob_0\left(\Gamma(M\cup\{g\}) \geq \gamma(M\cup\{g\})\middle|\Gamma(M)=\gamma(M)\right)$ denotes the p-value obtained by a standard one-sided Fisher's exact test of independence between the events ``having a mutation for alteration $g$'' and ``having at least one mutation amongst the alterations in set $M$''.

Let us now treat the data, and hence $p_{M \cup\{g\}}$ and $p_{M\cup\{g\}|M}$, as random variables.
Note that for any $g \notin M$, $|M \cup\{g\}|=|M|+1$ and $\frac{w_{M \cup\{g\}}}{w_M}=1-(1-\alpha)^\frac{1}{m-|M|}$. Also note that that because $p_{M\cup\{g\}|M}$ is a p-value for fixed $\Gamma(M)$, $\prob_0\left(p_{M\cup\{g\}|M} \leq \alpha \middle| \Gamma(M)\right) \leq \alpha$.

\begin{align*}
&1-\prob_0\left(\exists{g \notin M} : w_{M \cup\{g\}}^{-1} p_{M \cup\{g\}} < w_M^{-1} p_M \middle| \Gamma(M)\right)\\
&= \prod_{g \notin M} \prob_0\left(w_{M \cup\{g\}}^{-1} p_{M \cup\{g\}} \geq w_M^{-1} p_M\middle| \Gamma(M)\right)\\
& \geq \prod_{g \notin M} \prob_0\left(p_{M\cup\{g\}|M} \geq \frac{w_{M \cup\{g\}}}{w_M}\middle| \Gamma(M)\right)\\
& \geq \prod_{g \notin M} \left(1- \frac{w_{M \cup\{g\}}}{w_M}\right)\\
&= 1-\alpha
\end{align*}
\end{proof}
The inequality $p_{M \cup\{g\}} \geq p_{M\cup\{g\}|M} \cdot p_M$ becomes an equality when ${M \cup\{g\}}$ has perfect mutual exclusivity; and the null distribution of the discrete p-value $p_{M\cup\{g\}|M}$ converges to a $\mathrm{Uniform}(0,1)$ for large sample sizes. Hence the inequalities in the proof of Lemma~\ref{lemma:implicittest} are ``tight'' for small $\alpha$, and any weighting scheme with weights proportional to $w_M \cdot (1+\epsilon)^{|M|}$, $\epsilon>0$ will in general no longer satisfy Lemma~\ref{lemma:implicittest}. In this sense, the weighting scheme (\ref{lemma:implicittest}) is the flattest (the one least penalising large sets) which satisfies Lemma~\ref{lemma:implicittest}.

When using a weighting scheme which does not satisfy Lemma~\ref{lemma:implicittest} for $\alpha=0.5$ (e.g. uniform weights), we run the risk that most of the highly significant aletration sets will contain passenger mutations. Hence such weighting schemes arguably favour larger alteration sets unfairly over smaller ones. In our analysis we will set $\alpha=0.05$.

We will use a weighted Bonferroni correction. Because the weights (\ref{eq:weights}) are normalised so that the average weight is $1$ this controls the familywise error rate \citep[Lemma~2.1]{roeder_genome-wide_2009}. The same weighting scheme could be used with a weighted Benjamini-Hochberg procedure to control the false discovery rate \citep[Theorem~1]{genovese_false_2006}. The weighted Bonferroni correction consists in multiplying the uncorrected p-value $p_M$ corresponding to an alteration set $M$ by
\begin{equation}\label{eq:bonferronicorrection}
\frac{\sum_{\ell=2}^{\kmax}{m \choose \ell}}{w_M}=\frac{\sum_{\ell=2}^{\kmax}{m \choose \ell}\prod_{k=2}^{\ell}\left(1-(1-\alpha)^\frac{1}{m-k+1}\right)}{\prod_{k=2}^{|M|}\left(1-(1-\alpha)^\frac{1}{m-k+1}\right)}.
\end{equation}
For small $\alpha$, large $m$, and large $\kmax$, \eqref{eq:bonferronicorrection} can be well approximated by 
\[
(\mathrm{e}^\alpha-1-\alpha)\alpha^{-|M|}|M|!{m \choose |M|}.
\]

It is clear from Table~\ref{table:bonferroni} that our weighted Bonferroni correction depends weakly on $\kmax$, and indeed when compared to the standard approach of only testing pairs of alterations ($\kmax=2$), it causes only a small increase in the multiple hypothesis correction applied to pairwise tests, even when setting $\kmax=m$. The results in Section~\ref{sec:results} show that despite almost all of the weight being focused on the smallest sets ($|M|=2$), in most cases larger sets ($3 \leq |M| \leq 6$) still obtain smaller weighted p-values than any of the pairs that they contain.
\begin{table}
\centering
\begin{tabular}{|c|cccc|}
\hline
 & $\kmax=2$ & $\kmax=3$ & $\kmax=4$ & $\kmax \geq 5$ \\\hline
$|M|=2$ & $\mathbf{1.004653 \cdot 10^6}$ & $1.021830 \cdot 10^6$ 
& $1.022050 \cdot 10^6$ 
& $1.022053 \cdot 10^6$ 
\\
$|M|=3$ &$\infty$  & $2.820910 \cdot 10^{10}$ 
& $2.821518 \cdot 10^{10}$ 
& $2.821524 \cdot 10^{10}$ 
\\
$|M|=4$ & $\infty$  & $\infty$  & $7.783707 \cdot 10^{14}$ & $7.783725 \cdot 10^{14}$
\\
$|M|=5$ & $\infty$  & $\infty$  & $\infty$  & $2.145775 \cdot 10^{19}$
\\\hline
\end{tabular}
\caption{\label{table:bonferroni}
Values for the weighted Bonferroni Correction (\ref{eq:bonferronicorrection}) where  $m=1418$ and $\alpha=0.05$. The case of unweighted pairwise tests is in bold.}
\end{table}

\section{A greedy algorithm for generating candidate sets\label{sec:greedy}}
Because even for moderate $\kmax$ the number of possible alteration sets to consider $\sum_{k=2}^{\kmax}{m \choose k}$ is too large, we cannot exhaustively compute all $p_M$, and must select a limited number of candidate alteration sets to test. We set the maximum alteration set size to $k_\mathrm{max}=10$. The number of alteration sets satisfying $2 \leq |M| \leq 10$ is $\sum_{k=2}^{10} {{1418}\choose{k}} \approx 8.84\cdot 10^{24}$.

\citet{constantinescu_timex:_2015} generate candidate sets by considering the cliques in a graph where the edges indicate statistically significant pairwise anti-co-occurrence. \citet{leiserson_comet:_2015-2} explores the space of possible sets of sets of alterations through an MCMC algorithm which swaps out one alteration at a time. As a simpler alternative to MCMC, \citet{vandin_novo_2012} also considers a greedy approach to finding the alteration set which maximises their test statistic $W(M)=\coverage(M)-\omega(M)$, and show that for large sample sizes the greedy approach succeeds with high probability.

The greedy algorithm for generating candidate alteration sets is based on two insights: Firstly, each alteration can also be considered as an alteration set of size 1. Secondly, Fisher's exact test can be used to test for anti-co-occurrence between alteration sets, i.e. \eqref{eq:fishercdist} remains valid if $M=g_1 \cup g_2$ where $g_1$ and $g_2$ are alteration sets rather than single alterations.
The proposed algorithm takes as inputs the binary alteration matrix $\bA$ and the number of alteration sets one wishes to generate maxIter (which we will set to 5000). It then generates at each iteration a new alteration set by taking the union of the two existing alteration sets which are the most significantly anti-co-occurring. Pairs of alteration sets whose union is an existing alteration set or is larger than $\kmax$ are ignored. To ensure that the space of small alteration sets is properly explored, the iterations are split into $k_\mathrm{max}-1$ equal-sized epochs and the maximum allowed size of new alteration sets is increased by 1 from 2 to $k_\mathrm{max}$ between epochs.

As described in Section~\ref{sec:stouffer}, within this greedy algorithm significance is measured by performing Fisher's exact tests independently on each of the tumour types and combining the mid p-values of these tests. An advantage of using mid-p-values over p-values here is that the ordering of mid-p-values is more appropriate for our purposes: If $g_1$ and $g_2$ are perfectly mutually exclusive, but their p-value is close to 1 because of discreteness, then they are arguably more significantly anti-co-occurring than two alterations with a large overlap. The mid-p-value of perfectly mutually exclusive alterations will always be $\leq 0.5$.
The fact that the tests used in the greedy algorithm may be liberal is not an issue since the purpose of this step is only to heuristically search for candidate alteration sets on which accurate groupwise testing will be performed.

All subsets of the candidate alteration sets are then added to the set of candidate alteration sets.

Finally, the exact p-value $p_\tau$ for each candidate alteration set is computed for each tumour type as described in Section~\ref{sec:exacttest}, and these are randomised and combined as described in Section~\ref{sec:stouffer} to produce an overall p-value $p_M=p_M^\mathrm{rand}$. The Bonferroni correction \eqref{eq:bonferronicorrection} with $\alpha=0.05$ is then applied, and Bonferroni-corrected p-values greater than $\alpha=0.05$ are discarded as non-significant. The ordering of Bonferroni-corrected p-values is the  same as the ordering of weighted p-values, and in light of Lemma~\ref{lemma:implicittest} we also discard an alteration set as non-significant if any of its subsets (or supersets) has a smaller Bonferroni-corrected p-value.

\section{Application to the TCGA Pancancer dataset\label{sec:TCGA}}
\subsection{Pre-processing}\label{sec:preprocessing}
Somatic SNP and CNV data is obtained for 22 tumour types as described in Table~\ref{table:types}. All but one of the tumour type datasets (and more than 99\% of samples) was collected by TCGA. The total sample size across all 22 datasets is $n=5807$.

\begin{table}[h]
\centering
\begin{tabular}{llrl}
Tumour Type & Acronym & Samples & Publication\\\hline

Acute myeloid leukemia & AML & 193 & \citep{tcga_genomic_2013}\\ 

Urothelial bladder cancer & BLCA & 129 & \citep{tcga_comprehensive_2014}\\
Breast cancer & BRCA & 975 & \citep{tcga_comprehensive_2012}\\
Colon adenocarcinoma & COAD & 216 & \citep{tcga_comprehensive_2012-1}\\ 
Glioblastoma multiforme & GBM & 281 & \citep{tcga_comprehensive_2008}\\
Head \& neck squamous cell carcinoma & HNSC & 302 & \citep{tcga_comprehensive_2015}\\
Clear cell kidney carcinoma & KIRC & 292 & \citep{tcga_comprehensive_2013}\\
Papillary kidney carcinoma & KIRP & 161 & \citep{tcga_comprehensive_2016}\\
Lower grade glioma & LGG & 286 & \citep{tcga_comprehensive_2015-1}\\
Liver hepatocellular carcinoma & LIHC & 187 & \\
Lung adenocarcinoma & LUAD & 466 & \citep{tcga_comprehensive_2014-1}\\
Lung squamous cell carcinoma & LUSC & 178 & \citep{tcga_comprehensive_2012-2}\\
Ovarian serous cystadenocarcinoma & OV & 139 & \citep{tcga_integrated_2011}\\
Pancreatic ductal adenocarcinoma & PAAD & 168 & \\

Prostate cancer (castration-resistant) & CRPC & 58 & \citep{grasso_mutational_2012}\\ 

Prostate adenocarcinoma & PRAD & 199 & \citep{TCGA_molecular_2015}\\
Rectal adenocarcinoma & READ & 81 & \citep{tcga_comprehensive_2012-1}\\ 
Sarcoma & SARC & 257 & \\
Cutaneous melanoma & SKCM & 364 & \citep{tcga_genomic_2015}\\
Stomach adenocarcinoma & STAD & 229 & \citep{tcga_comprehensive_2014-2}\\
Papillary thyroid carcinoma & THCA & 403 & \citep{tcga_integrated_2014}\\
Uterine corpus endometrial carcinoma & UCEC & 243 & \citep{tcga_integrated_2013}\\\hline
\end{tabular}
\caption{Number of samples for each tumour type. All data is from The Cancer Genome Atlas Research Network, with the exception of the additional prostate cancer dataset from Grasso et al.\label{table:types}}
\end{table}

\paragraph{Selection of genes:}
Although the proposed algorithm can be used in theory to detect genes which are relevant to cancer purely through their mutation pattern relative to other genes, this will not be attempted in this paper, and we will focus instead only on the search for relationships between genes. Hence we will restrict our analysis to genes which have already been identified as potentially relevant to cancer because of their high mutation rate, or through other means. Specifically, we include in our analysis the 127 genes which were identified as significantly mutated in \cite[Supplementary Table 4]{kandoth_mutational_2013} and the 487 genes
 listed by the Catalogue Of Somatic Mutations in Cancer (COSMIC) cancer gene census \cite[Supplementary Table S1]{futreal_census_2004}.
 Combining these two sources yields a whitelist of 547 genes. 
 Because there are three possible alteration types (SNV, AMP, DEL) for each gene, the whitelist corresponds to 1641 possible alterations.

\medskip

 Alterations which are mutated in exactly the same set of samples (i.e. identical rows in $\bA$) are merged into one since in terms of their anti-co-occurrence they are mathematically indistinguishable. This step can reduce the computational requirements and can also aid interpretation. None of these merged alterations appear in the significant sets.
 
 The number of alterations considered is reduced further by eliminating those alterations which are too rare for there to be a realistic chance of measuring statistically significant anti-co-occurrence after a Bonferroni correction is applied. The minimum number of mutations required for inclusion is set to be greater than $\log_2(\mbox{Number of alterations remaining}-1)$. Based on this heuristic, all alterations with $\coverage(g)<11$ were removed. This leaves us with a final binary mutation matrix $\bA$ describing $m=1418$ alterations across $n=5807$ samples.
 
 
Missing values in $\bA$ are set to 0 (i.e. the alteration is assumed to be absent).

\subsection{Results}\label{sec:results}

After a weighted Bonferroni correction with $\alpha=0.05$ as the significance level, We find 654 statistically significant sets of anti-co-occurring alterations (6 pairs, 147 triplets, 222 quadruplets, 261 quintuplets and 18 sextuplets). The 10 most statistically significant sets are given in Table~\ref{table:top10}. These alteration sets contain a total of 125 different alterations (116 different genes). We can think of each alteration set as a fully connected graph with alterations as vertices. The union of these 654 graphs contains 125 vertices and 969 edges. This graph is displayed in Figure~\ref{fig:fullgraph}. The rarest alteration out of these 125 is TNFAIP3(D) with 35 mutations (0.6\%).
\begin{figure}[htbp]
\includegraphics[width=0.9\textwidth,trim=90 90 70 80, clip]{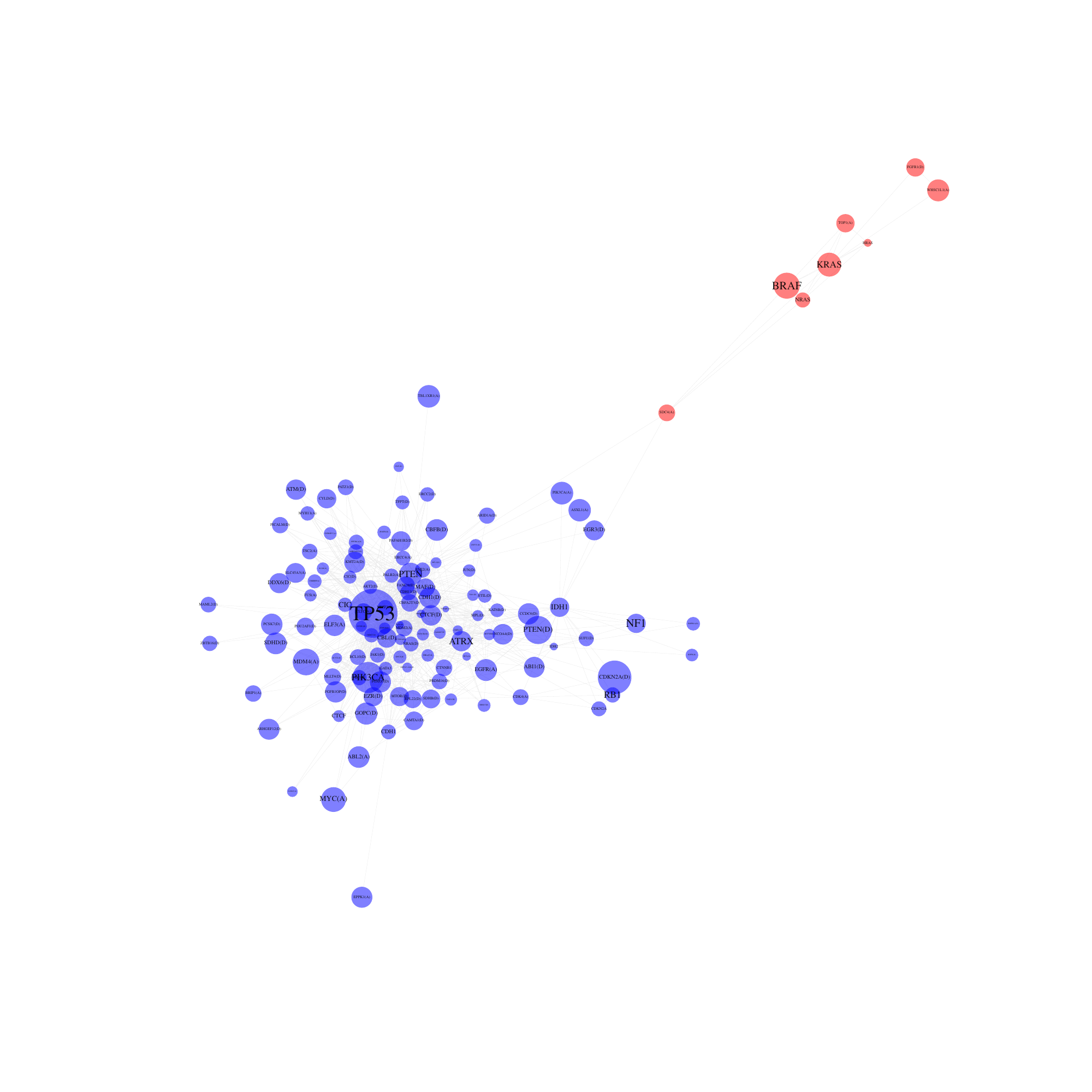}
\caption{The graph obtained from the union of the 654 statistically significant ($\alpha=0.05$) alteration sets obtained with our approach. The size of nodes is proportional to the coverage of the alteration. The color of nodes indicates the two communities (detected by label propagation).\label{fig:fullgraph}}
\end{figure}

The increased power from using groupwise testing is evident if we compare with the results from a pairwise analysis.
If we compute pairwise p-values based on Fisher's exact test (combined over all cancer types)%
,\footnote{In the pairwise analysis we used mid-p-values instead of randomised p-values in Stouffer's method.}
then after applying the Bonferroni correction $\frac{m(m-1)}{2}$, only 82 pairs of alterations are significant at the level $\alpha=0.05$, between 67 unique alterations (63 genes). The rarest alteration out of these 67 is TLX1(D) with 59 mutations (1\%). Only one of the pairs detected by the pairwise approach is missing an edge in the graph obtained with the groupwise approach, namely  (EIF4A2(A),PTEN), and this is because EIF4A2(A) is not in any of the sets detected by the groupwise approach. If we take the graph union of these pairs, displayed in Figure~\ref{fig:pairwisegraph}, there are no cliques of size $\geq 3$, i.e. the pairwise approach fails to detect any groups of size $\geq 3$. The graph does however split into 7 connected components.
\begin{figure}[htbp]
\includegraphics[width=0.9\textwidth,trim=100 100 80 90, clip]{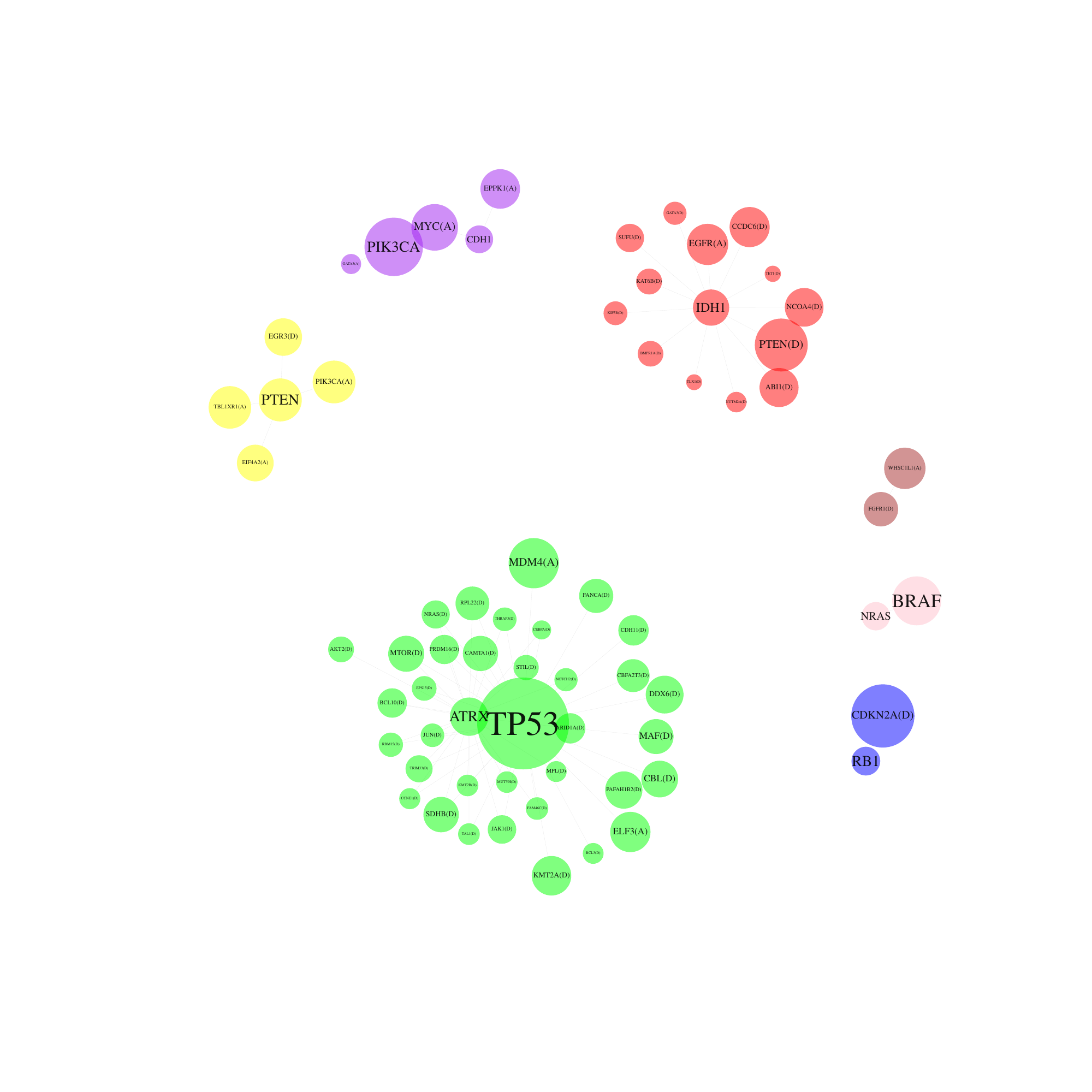}
\caption{The graph obtained from the statistically significant pairs ($\alpha=0.05$) of anti-co-occurring alterations. The size of nodes is proportional to the coverage of the alteration. The color of nodes is based on the connected components.\label{fig:pairwisegraph}}
\end{figure}

\begin{table}[htbp]
\centering
\begin{tabular}{rrrrl}
  \hline
 &$\frac{1}{n}\coverage(M)$& \begin{tabular}{@{}c@{}}p-value\\(uncorrected)\end{tabular} & \begin{tabular}{@{}c@{}}p-value\\(Weighted\\ Bonferroni)\end{tabular} & Alteration Set \\ 
  \hline
  1 & 25.6\% & $1.22 \cdot 10^{-53}$ & $5.20 \cdot 10^{-39}$ & TOP1(A), BRAF, HRAS, KRAS, NRAS \\ 
  2 & 24.0\% & $8.11 \cdot 10^{-45}$ & $1.22 \cdot 10^{-32}$ & SDC4(A), BRAF, KRAS, NRAS \\ 
  3 & 42.8\% & $3.77 \cdot 10^{-27}$ & $1.61 \cdot 10^{-17}$ & ARID1A(D), PTEN, TP53 \\ 
  4 & 37.2\% & $2.04 \cdot 10^{-26}$ & $8.70 \cdot 10^{-17}$ & STIL(D), HMGA2(A), TP53 \\ 
  5 & 9.2\% & $3.84 \cdot 10^{-26}$ & $1.64 \cdot 10^{-16}$ & TLX1(D), RBM15(D), ATRX \\ 
  6 & 36.3\% & $4.78 \cdot 10^{-26}$ & $2.04 \cdot 10^{-16}$ & MUTYH(D), TNFAIP3(D), TP53 \\ 
  7 & 48.1\% & $5.73 \cdot 10^{-31}$ & $2.45 \cdot 10^{-16}$ & STIL(D), TP53, MDM2(A), PTEN, CTCF(D) \\ 
  8 & 36.8\% & $6.84 \cdot 10^{-26}$ & $2.92 \cdot 10^{-16}$ & RBM15(D), TNFAIP3(D), TP53 \\ 
  9 & 13.9\% & $8.82 \cdot 10^{-26}$ & $3.76 \cdot 10^{-16}$ & FAM46C(D), NCOA4(D), ATRX \\ 
  10 & 45.3\% & $4.26 \cdot 10^{-28}$ & $6.43 \cdot 10^{-16}$ & JUN(D), FANCA(D), TP53, PTEN \\ 
   \hline
\end{tabular}
\caption{The top 10 most significant alteration sets obtained with our method after Bonferroni correction. (A) indicates an amplification, (D) indicates a deletion. The full list is available at \url{https://github.com/PaulGinzberg/MOCA/blob/master/example-output/pancancer-test2-significantv2.txt}\label{table:top10}}
\end{table}

\section{Acknowledgements}
This research was supported in part by EPSRC Mathematics Platform Grant EP/I019111/1. The empirical results in Section~\ref{sec:TCGA} are based upon data generated by the TCGA Research Network: \url{http://cancergenome.nih.gov/}.

\bibliographystyle{asa}
\bibliography{biblio}

\end{document}